\newcommand{\indeg}[2]{\mathrm{indeg}_{\mathrm{#1}}(#2)}
\newcommand{\outdeg}[2]{\mathrm{outdeg}_{\mathrm{#1}}(#2)}
\newcommand{\nitem}{\medskip \noindent \item}
\begin{document} 
\title{On the Total Number of Bends for Planar Octilinear Drawings}
\author[1]{Michael~A.~Bekos}
\author[1]{Michael~Kaufmann}
\author[1]{Robert~Krug}
\affil[1]{Wilhelm-Schickhard-Institut f\"ur Informatik, Universit\"at T\"ubingen, Germany\\
$\{$bekos,mk,krug$\}$@informatik.uni-tuebingen.de}
\date{}
\newtheorem{lemma}{Lemma}
\newtheorem{theorem}{Theorem}
\newtheorem{definition}{Definition}

\maketitle

\begin{abstract}
An \emph{octilinear drawing} of a planar graph is one in which each
edge is drawn as a sequence of horizontal, vertical and diagonal at
$45^\circ$ line-segments. For such drawings to be readable, special
care is needed in order to keep the number of bends small. As the
problem of finding planar octilinear drawings of minimum number of
bends is NP-hard~\cite{Noellenburg05}, in this paper we focus on
upper and lower bounds. From a recent result of Keszegh et
al.~\cite{KPP13} on the slope number of planar graphs, we can derive
an upper bound of  $4n-10$ bends for $8$-planar graphs with $n$
vertices. We considerably improve this general bound and
corresponding previous ones for triconnected $4$-, $5$- and
$6$-planar graphs. We also derive non-trivial lower bounds for these
three classes of graphs by a technique inspired by the network flow
formulation of Tamassia~\cite{Tamassia87}.
\end{abstract}

\section{Motivation and Background}
\label{sec:introduction}

Octilinear drawings of graphs have a long history of research, which
dates back to the early thirteenth century, when an English technical
draftsman, Henry Charles Beck (also known as Harry Beck), designed
the first schematic map of London Underground. His map, the so-called
Tube map, looked more like an electrical circuit diagram (consisting
of horizontal, vertical and diagonal line segments) rather than a
true map, as the underlying geographic accuracy was neglected. Laying
out networks in such a way is called \emph{octilinear graph drawing}
and plays an important role in map-schematization and the design of
metro-maps. In particular, an octilinear drawing $\Gamma(G)$ of a
graph $G=(V,E)$ is one in which each vertex occupies a point on an
integer grid and each edge is drawn as a sequence of horizontal,
vertical and diagonal at $45^\circ$ line segments. When $G$ is
planar, usually it is required $\Gamma(G)$ to be planar as well.

In planar octilinear graph drawing, an important goal is to keep the
number of bends small, so that the produced drawings can be
understood easily. However, the problem of determining whether a
given embedded planar graph of maximum degree eight admits a
bend-less planar octilinear drawing is
NP-complete~\cite{Noellenburg05}. This motivated us to neglect
optimality and study upper and lower bounds on the total number of
bends of such drawings. Surprisingly enough, very few results were
known, even if the octilinear model has been extensively studied in
the areas of metro-map visualization and map schematization.

One can derive the first (non-trivial) upper bound on the required
number of bends from a result on the planar slope number of graphs
by Keszegh et al.~\cite{KPP13}, who proved that every $k$-planar
graph (that is, planar of maximum degree $k$) has a planar drawing
with at most $\lceil\frac{k}{2}\rceil$ different slopes in which
each edge has at most two bends. For $3 \leq k \leq 8$, the drawings
are octilinear, which yields an upper bound of $6n-12$, where $n$ is
the number of vertices of the graph. The bound can be reduced to
$4n-10$ with some effort; see our subsection on related work.

On the other hand, it is known that every $3$-planar graph with five
or more vertices admits a planar octilinear drawing in which all
edges are bend-less~\cite{Kant92,GLM14}. Also, for $4 \leq k \leq 5$,
it was recently proved that $4$- and $5$-planar graphs admit planar
octilinear drawings with at most one bend per edge~\cite{BGKK14},
which implies that the total number of bends for $4$- and $5$-planar
graphs can be upper bounded by $2n$ and $5n/2$, respectively.

\begin{table}[t!]
  \centering
  \caption{A short summary of our results.}
  \label{table:results}
  \medskip
  \begin{tabular}{lcccccc}
    \toprule
    & & &\multicolumn{4}{c}{Upper bounds}\\ 
    \cmidrule{4-7}
    Graph class & Lower bound & Ref. & Previous & Ref. & New & Ref.\\
    \midrule 
    $3$-con. $4$-planar & $n/3-1$  & Thm.~\ref{thm:lb} & $2n$    & \cite{BGKK14} & $n+5$   & Thm.~\ref{thm:4ub}\\
    $3$-con. $5$-planar & $2n/3-2$ & Thm.~\ref{thm:lb} & $5n/2$  & \cite{BGKK14} & $2n-2$  & Thm.~\ref{thm:5ub}\\
    $3$-con. $6$-planar & $4n/3-6$ & Thm.~\ref{thm:lb} & $4n-10$ & \cite{KPP13}  & $3n-8$  & Thm.~\ref{thm:6ub}\\
    \bottomrule
  \end{tabular} 
\end{table}

The remainder of this paper is organized as follows. In
Section~\ref{sec:upperbounds}, we considerably improve all
aforementioned bounds for the classes of triconnected $4$-, $5$- and
$6$-planar graphs. In Section~\ref{sec:lowerbounds}, we present
corresponding lower bounds for these three classes of planar graphs.
We conclude in Section~\ref{sec:conclusions} with open problems and
future work. For a summary of our results also refer to
Table~\ref{table:results}.

\subsection{Related work.}
\label{sec:relatedwork}
As already stated, Keszegh et al.~\cite{KPP13} have proved that
every $k$-planar graph admits a planar drawing with at most
$\lceil\frac{k}{2}\rceil$ different slopes in which each edge has at
most two bends. If one appropriately adjusts the slopes of all edge
segments incident to a vertex, then one can show that any $k$-planar
graph, with $3 \leq k \leq 8$, admits a planar octilinear drawing in
which each edge has at most two bends. This implies that any
$k$-planar graph on $n$ vertices can have at most $6n-12$ bends,
where $3 \leq k \leq 8$. One can easily improve this bound to
$4n-10$ as follows. The edge that ``enters'' a vertex from its south
port and the edge that ``leaves'' each vertex from its top port in
the $s$-$t$ ordering of the algorithm of Keszegh et al.~can both be
drawn with only one bend each. Since each vertex is incident to
exactly two such edges (except for the first and last ones in the
$s$-$t$ ordering which are only incident to one such edge each), it
follows that $2n-2$ edges can be drawn with at most one bend. Hence,
the bound of $4n-10$ follows.

Octilinear drawings form a natural extension of the so-called
\emph{orthogonal drawings}, which allow for horizontal and vertical
edge segments only. For such drawings, the bend minimization problem
can be solved efficiently, assuming that the input is an embedded
graph~\cite{Tamassia87}. However, the corresponding minimization
problem over all embeddings of the input graph is
NP-hard~\cite{GT01}. Note that in~\cite{Tamassia87} the author
describes how one can extend his approach, so to compute a
bend-optimal octilinear representation\footnote{Recall that a
representation of a graph describes the angles and the bends of a
drawing, neglecting its exact geometry~\cite{Tamassia87}.} of any
given embedded $8$-planar graph. However, such a representation may
not be realizable by a corresponding planar octilinear
drawing~\cite{BT04}.

For orthogonal drawings, several bounds on the total number of bends
are known. Biedl~\cite{Bie96} presents lower bounds for graphs of
maximum degree $4$ based on their connectivity (simply connected,
biconnected or triconnected), planarity (planar or not) and
simplicity (simple or non-simple with multiedges or selfloops). It
is also known that any $4$-planar graph (except for the octahedron
graph) admits a planar orthogonal drawing with at most two bends per
edge~\cite{BK94,LMS98}. Trivially, this yields an upper bound of
$4n$ bends, which can be improved to $2n+2$~\cite{BK94}. Note that
the best known lower bound is due to Tamassia et al.~\cite{TTV91},
who presented $4$-planar graphs requiring $2n-2$ bends.

Finally, in metro-map visualization many approaches have been
proposed that result in octilinear or nearly-octilinear drawings
(see, e.g.,~\cite{HMN06,Noellenburg05,NW11,SROW11}). However, most
of them are heuristics and therefore do not focus on the
bend-minimization problem explicitly.

\subsection{Preliminaries.}
\label{sec:preliminaries}
Central in our approach is the canonical order~\cite{FPP90,Kant92b}
of triconnected planar graphs: Let $G=(V,E)$ be a triconnected
planar graph and let $\Pi = (P_0,\ldots,P_m)$ be a partition of $V$
into paths, such that $P_0 = \{v_1,v_2\}$, $P_m=\{v_n\}$ and $v_2
\rightarrow v_1 \rightarrow v_n$ is a path on the outerface of $G$.
For $k=0,1,\ldots,m$, let $G_k$ be the subgraph induced by
$\cup_{i=0}^k P_i$. Path $P_k$ is called \emph{singleton} if
$|P_k|=1$ and \emph{chain} otherwise;  Partition $\Pi$ is a
\emph{canonical order}~\cite{FPP90,Kant92b} of $G$ if for each
$k=1,\ldots,m-1$ the following hold (see also
Figure~\ref{fig:4p_example}):

\begin{enumerate}[(i)] 
\item $G_k$ is biconnected, 
\item all neighbors of $P_k$ in $G_{k-1}$ are on the outer face of
$G_{k-1}$ and
\item all vertices of $P_k$ have at least one neighbor in $P_j$ for
some $j > k$.
\end{enumerate}

To simplify the description of our algorithms, we direct and color
the edges of~$G$ based on partition~$\Pi$ (similar to Schnyder
colorings~\cite{Fe04}) as follows. The first partition $P_0$ of $\Pi$
defines exclusively one edge (that is, edge $(v_1,v_2)$), which we
color blue and direct towards vertex $v_1$. For each partition
$P_k=\{v_i, \ldots, v_{i+j}\} \in \Pi$ later in the order, let
$v_\ell$ and $v_r$ be the leftmost and rightmost neighbors of $P_k$
in $G_{k-1}$, respectively. In the case where $P_k$ is a chain (that
is, $j > 0$), we color edge $(v_i,v_\ell)$ and all edges between
vertices of $P_k$ blue and direct them towards $v_\ell$. The edge
$(v_{i+j},v_r)$ is colored green and is directed towards $v_r$ (see
Figure~\ref{fig:4p_ref_1}). In the case where $P_k$ is a singleton
(that is, $j = 0$), we color the edges $(v_i,v_\ell)$ and
$(v_{i},v_r)$ blue and green, respectively and we direct them towards
$v_\ell$ and $v_r$. We color the remaining edges incident to $P_k$
towards $G_{k-1}$ (if any) red and we direct them towards $v_i$
(see Figure~\ref{fig:4p_ref_5}).

Given a vertex $v \in V$ of $G$, we denote by $\indeg{x}{v}$
($\outdeg{x}{v}$, respectively) the in-degree (out-degree,
respectively) of vertex $v$ in color $x \in \{r,b,g\}$. Then, it is
not difficult to see that for a vertex $v \in V\setminus \{v_1\}$,
$\outdeg{b}{v}=1$, which implies that the blue subgraph is a
spanning tree of $G$. Similarly, $0 \leq
\outdeg{g}{v},~\outdeg{r}{v} \leq 1$. Hence, the green and the red
subgraphs form two forests of $G$. It also holds that $0 \leq
\outdeg{r}{v} \leq 1$ and $0 \leq
\indeg{b}{v},~\indeg{g}{v},~\indeg{r}{v} \leq d(G)-1$, where $d(G)$
is the degree of $G$. For an example refer to
Figure~\ref{fig:4p_e1}.

\begin{figure}[tb]
    \centering
    \begin{minipage}[b]{.24\textwidth}
        \centering
        \subfloat[\label{fig:4p_e1}{}]
        {\includegraphics[width=\textwidth,page=2]{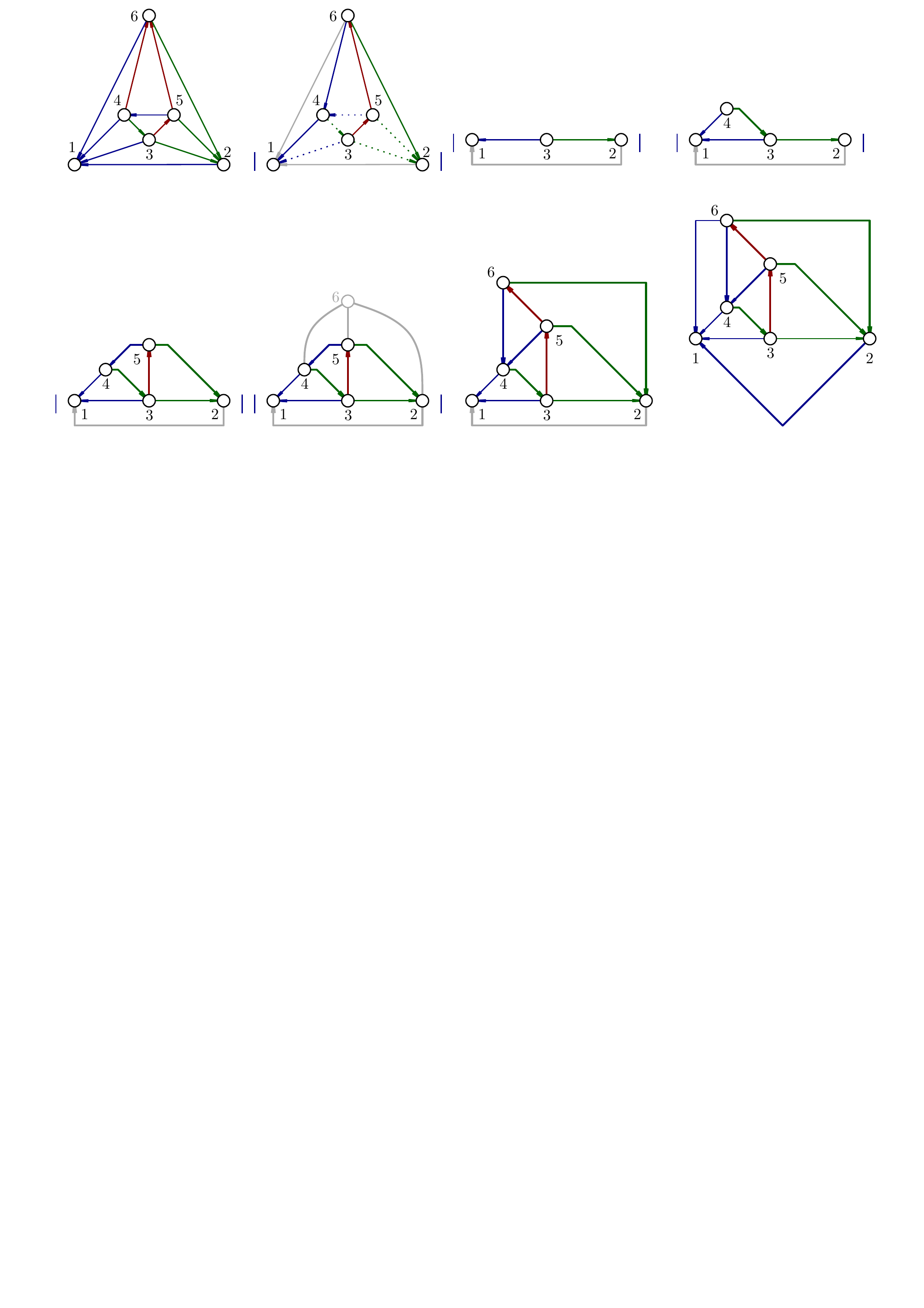}}
    \end{minipage}
    \begin{minipage}[b]{.24\textwidth}
        \centering
        \subfloat[\label{fig:4p_e2}{}]
        {\includegraphics[width=\textwidth,page=3]{images/4p_example}}
    \end{minipage}
    \begin{minipage}[b]{.24\textwidth}
        \centering
        \subfloat[\label{fig:4p_e3}{}]
        {\includegraphics[width=\textwidth,page=4]{images/4p_example}}
    \end{minipage}
    \begin{minipage}[b]{.24\textwidth}
        \centering
        \subfloat[\label{fig:4p_e4}{}]
        {\includegraphics[width=\textwidth,page=5]{images/4p_example}}
    \end{minipage}
    \begin{minipage}[b]{.24\textwidth}
        \centering
        \subfloat[\label{fig:4p_e5}{}]
        {\includegraphics[width=\textwidth,page=6]{images/4p_example}}
    \end{minipage}
    \begin{minipage}[b]{.24\textwidth}
        \centering
        \subfloat[\label{fig:4p_e6}{}]
        {\includegraphics[width=\textwidth,page=7]{images/4p_example}}
    \end{minipage}
    \begin{minipage}[b]{.24\textwidth}
        \centering
        \subfloat[\label{fig:4p_e7}{}]
        {\includegraphics[width=\textwidth,page=8]{images/4p_example}}
    \end{minipage}
    \begin{minipage}[b]{.24\textwidth}
        \centering
        \subfloat[\label{fig:4p_e8}{}]
        {\includegraphics[width=\textwidth,page=9]{images/4p_example}}
    \end{minipage} 
    \caption{
    An illustration of our algorithm for triconnected $4$-planar graphs by an example: the octahedron graph. 
    The underlying canonical order consists of the following partitions:
    $P_0=\{v_1,v_2\}$, $P_1=\{v_3\}$, $P_2=\{v_4\}$, $P_3=\{v_5\}$ and $P_4=\{v_6\}$.
    (a)~The direction and the coloring of the edges. 
    (b)~The corresponding reference edges (bold drawn); the edge $(v_1,v_2)$ of the first partition and the edge $(v_1,v_6)$ incident to the last (degree $4$) partition are ignored. 
    (c)~The placement of the first two partitions.
    (d)~The placement of a singleton of degree $2$ incident to reference edge $(v_4,v_3)$ that is drawn bent.
    (e)~The placement of a singleton of degree $3$ incident to reference edges $(v_5,v_4)$ and $(v_5,v_2)$ that are drawn bent.
    (f)~The last singleton $v_6$ is not incident to reference edges. So, $(v_6,v_4)$, $(v_5,v_6)$ and $(v_6,v_2)$ must be drawn bend-less, which is not possible.
    (g)~Vertex $v_5$ is translated upwards in the direction implied by the edge $(v_3,v_5)$ until one of the horizontal segments incident to $v_5$ is eliminated, which makes the placement of $v_6$ posible.
    (h)~The final layout containing $(v_2,v_1)$ and $(v_6,v_1)$; the dotted edge can be drawn with a single bend.}
    \label{fig:4p_example} 
\end{figure} 

\section{Upper Bounds}
\label{sec:upperbounds}

In this section, we present upper bounds on the total number of
bends for the classes of triconnected $4$-planar
(Section~\ref{sec:4planar}), $5$-planar (Section~\ref{sec:5planar})
and $6$-planar (Section~\ref{sec:6planar}) graphs.

\subsection{Triconnected 4-Planar Graphs.}
\label{sec:4planar}

Let $G=(V,E)$ be a triconnected $4$-planar graph. Before we proceed
with the description of our approach, we need to define two useful
notions. First, a \emph{vertical cut} is a $y$-monotone continuous
curve that crosses only horizontal segments and divides a drawing
into a left and a right part; see e.g.~\cite{FCK98}. Such a cut
makes a drawing horizontally stretchable in the following sense: One
can shift the right part of the drawing that is defined by the
vertical cut further to the right while keeping the left part of
the drawing in place and the result is a valid octilinear drawing.
Similarly, one can define a \emph{horizontal cut}. 

Since $G$ has at most $2n$ edges, by Euler's formula, it follows that
$G$ has at most $n+2$ faces. In order to construct a drawing
$\Gamma(G)$ of $G$, which has roughly at most $n+2$ bends, we also
need to associate to each face of $G$ a so-called \emph{reference
edge}. This is done as follows. Let $\Pi = \{P_0, \ldots, P_m \}$ be
a canonical order of $G$ and assume that $\Gamma(G)$ is constructed
incrementally by placing a new partition of $\Pi$ each time, so that
the boundary of the drawing constructed so far is a $x$-monotone
path. When placing a new partition $P_k \in \Pi$, $k=1,\ldots,m-1$,
one or two bounded faces of $G$ are formed (note that we treat the
last partition $P_m$ of $\Pi$ separately). More precisely, if $P_k$
is a chain or a singleton of degree $3$ in $G_k$, then only one
bounded face is formed. Otherwise (that is, $P_k$ is a singleton of
degree $4$ in $G_k$), two new bounded faces are formed. In both
cases, each newly-formed bounded  face consists of at least two edges
incident to vertices of $P_k$ and at least one edge of $G_{k-1}$. In
the former case, the reference edge of the newly-formed bounded face,
say $f$, is defined as follows. If $f$ contains at least one green
edge that belongs to $G_{k-1}$, then the reference edge of $f$ is the
leftmost such edge (see Figure~\ref{fig:4p_ref_1} and
\ref{fig:4p_ref_3}). Otherwise, the reference edge of $f$ is the
leftmost blue edge of $f$ that belongs to $G_{k-1}$ (see
Figure~\ref{fig:4p_ref_2} and \ref{fig:4p_ref_4}). In the case where
$P_k$ is a singleton of degree $4$ in $G_k$, the reference edge of
each of the newly formed faces is the edge of $G_{k-1}$ that is
incident to the endpoint of the red edge involved. Observe that by
definition a red edge cannot be a reference edge. For an example see
Figure~\ref{fig:4p_e2}.

\begin{figure}[t]
    \centering 
    \begin{minipage}[b]{.17\textwidth}
        \centering
        \subfloat[\label{fig:4p_ref_1}{}] 
        {\includegraphics[width=\textwidth,page=1]{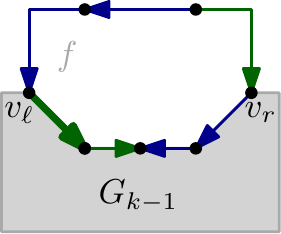}}
    \end{minipage} 
    \begin{minipage}[b]{.17\textwidth}
        \centering 
        \subfloat[\label{fig:4p_ref_2}{}]
        {\includegraphics[width=\textwidth,page=2]{images/4p_ref}}
    \end{minipage}
    \begin{minipage}[b]{.17\textwidth}
        \centering 
        \subfloat[\label{fig:4p_ref_3}{}]
        {\includegraphics[width=\textwidth,page=3]{images/4p_ref}}
    \end{minipage}
    \begin{minipage}[b]{.17\textwidth}
        \centering 
        \subfloat[\label{fig:4p_ref_4}{}]
        {\includegraphics[width=\textwidth,page=4]{images/4p_ref}}
    \end{minipage}
    \begin{minipage}[b]{.17\textwidth}
        \centering 
        \subfloat[\label{fig:4p_ref_5}{}]
        {\includegraphics[width=\textwidth,page=5]{images/4p_ref}}
    \end{minipage}
    \caption{Illustration of the reference edge (bold drawn) in the case of: 
    (a-b) a chain, 
    (c-d) a singleton of degree $2$ in $G_k$ and 
    (e) a singleton of degree $3$ in $G_k$.}
    \label{fig:4p_ref}
\end{figure}

As already stated, we will construct $\Gamma(G)$ in an incremental
manner by placing one partition of $\Pi$ at a time. For the base, we
momentarily neglect the edge $(v_1, v_2)$ of the first partition
$P_0$ of $\Pi$ and we start by placing the second partition, say a
chain $P_1 = \{v_3 , \ldots ,v_{|P_1|+2} \}$, on a horizontal line from left to
right. Since by definition of $\Pi$, $v_3$ and $v_{|P_1|+2}$ are
adjacent to the two vertices, $v_1$ and $v_2$, of the first
partition $P_0$, we place $v_1$ to the left of $v_3$ and $v_2$ to the right of
$v_{|P_1|+2}$. So, they form a single chain where all edges are
drawn using horizontal line-segments that are attached to the east
and west port at their endpoints. The case where $P_1$ is a
singleton is analogous (assuming  that $P_1$ is a chain of unit
length). Assume now that we have already constructed a drawing for
$G_{k-1}$ which has the following invariant properties:

\begin{enumerate}[{IP-}1:] \item \label{ip1}  The number of edges of
$G_{k-1}$ with a bend is at most equal to the number of reference
edges in $G_{k-1}$.
\item \label{ip2} The north-west, north and north-east (south-west,
south and south-east) ports of each vertex are occupied by incoming
(outgoing) blue and green edges and by outgoing (incoming) red
edges\footnote{Note, however, that not all of them can be
simultaneously be occupied due to the degree restriction.}.
\item \label{ip3} If a horizontal port of a vertex is occupied, then
it is occupied either by an edge with a bend (to support vertical
cuts) or by an edge of a chain.
\item \label{ip4} A red edge is not on the outerface of $G_{k-1}$.
\item \label{ip5} A blue (green, respectively) edge of $G_{k-1}$ is
never incident to the north-west (north-east, respectively) port of
a vertex of $G_{k-1}$.
\item \label{ip6} From each reference edge on the outerface of
$G_{k-1}$ one can devise a vertical cut through the drawing of
$G_{k-1}$.
\end{enumerate}
The base of our algorithm conforms with the aforementioned invariant
properties. In the following, we will show how to add the next
partition $P_k$ with $k<m$, so that all invariant properties are
fulfilled. In our description, we will mainly describe the port
assignment at each vertex that will always conform to
IP-\ref{ip2}--\ref{ip5}, which fully specifies how each edge must be
drawn (in other words, we describe the relative coordinates of the
vertices). The exact coordinates can then be computed by adopting an
approach similar to the one of Bekos et al.~\cite{BGKK14}, since the
base of each newly formed face is horizontally stretchable (follows
from IP-\ref{ip6}). Next, we consider the three main cases; see also
Figure~\ref{fig:4p_example} for an example.

\begin{enumerate}[C.1:]

\item \label{c:sing3} \emph{$P_k = \{ v_i \}$ is a singleton of
degree $2$ in $G_k$}; see Figure~\ref{fig:4p_1}, \ref{fig:4p_2}. Let
$v_\ell$ and $v_r$ be the leftmost and rightmost neighbors of $v_i$ in
$G_{k-1}$ (note that $v_\ell$ and $v_r$ are not necessarily
neighboring). We claim that the north-east port of $v_\ell$ and the
north-west port of $v_r$ cannot be simultaneously occupied. For a
proof by contradiction, assume that the claim does not hold. Denote
by $v_\ell \rightsquigarrow v_r$ the path from $v_\ell$ to $v_r$ at
the outerface of $G_{k-1}$ (neglecting the direction of the edges).
By IP-\ref{ip5}, $v_\ell \rightsquigarrow v_r$ starts as blue from
the north-east port of $v_\ell$ and ends as green at the north-west
port of $v_r$. So, inbetween there is a vertex of the path $v_\ell
\rightsquigarrow v_r$ which has a neighbor in $P_j$ for some $j \geq
k$; a contradiction to the degree of $v_i$. Without loss of generality
assume that the north-east port of $v_\ell$ is unoccupied. To draw
the edge $(v_i,v_\ell)$, we distinguish two cases. If $(v_i,v_\ell)$
is the reference edge of a face, then we draw $(v_i,v_\ell)$ as a
horizontal-diagonal combination from the west port of $v_i$ towards
the north-east port of $v_\ell$. Otherwise, $(v_i,v_\ell)$ is drawn
bend-less from the south-west port of $v_i$ towards the north-east 
port of $v_\ell$. To draw the edge $(v_i,v_r)$, again we distinguish
two cases. If the north-west port at $v_r$ is unoccupied, then
$(v_i,v_r)$ will use this port at $v_r$. Otherwise, $(v_i,v_r)$ will
use the north port at $v_r$. In addition, if $(v_i,v_r)$ is the
reference edge of a face, then $(v_i,v_r)$ will use the east port at
$v_i$. Otherwise, the south-east port at $v_i$. The port assignment
described above conforms to IP-\ref{ip2}--\ref{ip5}. Clearly,
IP-\ref{ip1} also holds. IP-\ref{ip6} holds because the newly
introduced edges that are reference edges have a horizontal segment,
which inductively implies that vertical cuts through them are possible.

\begin{figure}[t!]
    \centering 
    \begin{minipage}[b]{.18\textwidth}
        \centering
        \subfloat[\label{fig:4p_1}{}]
        {\includegraphics[width=\textwidth,page=1]{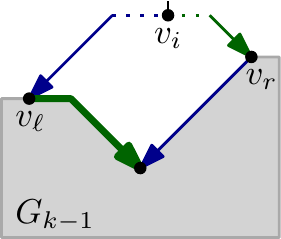}}
    \end{minipage} 
    \begin{minipage}[b]{.18\textwidth}
        \centering 
        \subfloat[\label{fig:4p_2}{}] 
        {\includegraphics[width=\textwidth,page=2]{images/4p}}
    \end{minipage}
    \begin{minipage}[b]{.18\textwidth}
        \centering 
        \subfloat[\label{fig:4p_10}{}]
        {\includegraphics[width=\textwidth,page=10]{images/4p}}
    \end{minipage}
    \begin{minipage}[b]{.18\textwidth}
        \centering 
        \subfloat[\label{fig:4p_3}{}] 
        {\includegraphics[width=\textwidth,page=3]{images/4p}}
    \end{minipage}
    \begin{minipage}[b]{.18\textwidth}
        \centering 
        \subfloat[\label{fig:4p_4}{}] 
        {\includegraphics[width=\textwidth,page=4]{images/4p}}
    \end{minipage}
    \begin{minipage}[b]{.18\textwidth}
        \centering 
        \subfloat[\label{fig:4p_5}{}] 
        {\includegraphics[width=\textwidth,page=5]{images/4p}}
    \end{minipage}
    \begin{minipage}[b]{.18\textwidth}
        \centering 
        \subfloat[\label{fig:4p_6}{}] 
        {\includegraphics[width=\textwidth,page=6]{images/4p}}
    \end{minipage}    
    \begin{minipage}[b]{.18\textwidth}
        \centering 
        \subfloat[\label{fig:4p_7}{}]
        {\includegraphics[width=\textwidth,page=7]{images/4p}}
    \end{minipage}
    \begin{minipage}[b]{.18\textwidth}
        \centering 
        \subfloat[\label{fig:4p_8}{}]
        {\includegraphics[width=\textwidth,page=8]{images/4p}}
    \end{minipage}
    \begin{minipage}[b]{.18\textwidth}
        \centering 
        \subfloat[\label{fig:4p_9}{}]
        {\includegraphics[width=\textwidth,page=9]{images/4p}}
    \end{minipage}
    \caption{Illustration of: 
    (a-b)~the case of a degree-$2$ singleton in $G_k$,
    (c)~the case of a chain,
    (d-j)~the case of a singleton of degree $3$ in $G_k$ 
    (dotted segments can have zero length).}
    \label{fig:4p_part1} 
\end{figure}

\item \label{c:chain} \emph{$P_k=\{v_i,\ldots v_{i+j}\}$ with $j
\geq 1$ is a chain.} This case is similar to case C.\ref{c:sing3},
as $P_k$ has also exactly two neighbors in $G_{k-1}$ (which we again
denote by $v_\ell$ and $v_r$). The edges between $v_i,\ldots
,v_{i+j}$ will be drawn as horizontal segments connecting the west
and east ports of the respective vertices; see
Figure~\ref{fig:4p_10}. The edges $(v_i,v_\ell)$ and $(v_{i+j},v_r)$
are drawn based on the rules of the case C.\ref{c:sing3} (e.g., in
Figure~\ref{fig:4p_10} edge $(v_i,v_\ell)$ is a reference edge, while
the edge $(v_{i+j},v_r)$ is not). Hence, the port assignment still
conforms to IP-\ref{ip2}--IP-\ref{ip5}. In addition, IP-\ref{ip1}
and IP-\ref{ip6} hold, since all edges of the chain are horizontal.

\item \label{c:sing4} \emph{$P_k = \{v_i\}$ is a singleton of
degree~$3$ in $G_k$}. This is the most involved case. Let $v_\ell$
and $v_r$ be the leftmost and rightmost neighbors of $v_i$ in
$G_{k-1}$ and let $v_m$ be the third neighbor of $v_i$ in $G_{k-1}$.
By IP-\ref{ip2} and the degree restriction, the north port of $v_m$
is unoccupied. If the north-east port of $v_\ell$ and the north-west
port of $v_r$ are simultaneously unoccupied, we proceed analogously
to case C.\ref{c:sing3}; see Figure~\ref{fig:4p_3}.
Clearly, IP-\ref{ip1} and IP-\ref{ip6} hold. Consider now the more
involved case, where the north-east port of $v_\ell$ is occupied and
simultaneously $(v_i,v_\ell)$ is not a reference edge. Hence, by
IP-\ref{ip6} $(v_i,v_\ell)$ must be drawn bend-less. Since the
north-east port at $v_\ell$ is occupied, by IP-\ref{ip4} it follows
that the edge at the north-east port of $v_\ell$ is not red.
Therefore, by IP-\ref{ip2} and IP-\ref{ip5}, the edge at the
north-east port of $v_\ell$ is blue. This implies that the path
$v_\ell \rightsquigarrow v_m$ at the outerface of $G_{k-1}$ consists
of exclusively blue edges pointing towards $v_\ell$. Hence, by
IP-\ref{ip5} the north-east port at $v_m$ is unoccupied. Edge
$(v_i,v_\ell)$ can be drawn bend-less if the edge $(v_i,v_r)$ is
a reference edge (that is, by IP-\ref{ip6} $(v_i,v_r)$ has a bend);
see Figure~\ref{fig:4p_4}. In the case where the edge $(v_i,v_r)$ is not
a reference edge (that is, none of $(v_i,v_\ell)$ and $(v_i,v_r)$ is
a reference edge), we need a different argument.
We further distinguish two sub-cases.

\begin{inparaenum}[{C.\ref{c:sing4}}.1:] \nitem \label{c:blue}
\emph{The edge incident to $v_m$ on the path $v_m \rightsquigarrow
v_r$ on the outerface of $G_{k-1}$ is green} (we cope with the case
where this edge is blue later). By definition, the blue (green) edge
of $v_\ell \rightsquigarrow v_m$ ($v_m \rightsquigarrow v_r$) 
incident to $v_m$ is a reference edge and by IP-\ref{ip6} has a
bend. Our aim is to ``eliminate'' one of these bends and draw one of the
edges $(v_i,v_\ell)$ or $(v_i,v_r)$ with a bend and the other one
bend-less. So, IP-\ref{ip1} still holds. In this case, $v_m$ may or
may not be incident to another red edge in $G_{k-1}$ (equivalently,
$v_m$ is either of degree $4$ or $3$, respectively). Without loss of
generality we assume that $v_m$ is incident to another red edge, say
$(v_m,v_m')$, in $G_{k-1}$, that is, $v_m$ is of degree $4$. In this
case, we translate $v_m$ upwards in the direction implied by the
slope of the edge $(v_m,v_m')$, until one of the horizontal segments
of the edges incident to $v_m$ on the outerface of $G_{k-1}$ is
completely eliminated; see Figure~\ref{fig:4p_5}. The only case,
where the aforementioned segment elimination is not possible, is
when $(v_m,v_m')$ is vertical and the edges incident to $v_m$ at the
outerface of $G_{k-1}$ are both horizontal-vertical combinations;
see Figure~\ref{fig:4p_6}. In this particular case, however, by
IP-\ref{ip2} it follows that either the north-west or the north-east
port at $v_m'$ is free. Since both edges incident to $v_m$ at the
outerface of $G_{k-1}$ are bent, by IP-\ref{ip3} we can redraw
$(v_m',v_m)$ so to be diagonal and then we proceed similarly to the
previous case; see Figure~\ref{fig:4p_7}. Also, observe that the
port assignment still conforms to IP-\ref{ip2}--IP-\ref{ip5}.

\nitem \label{c:green} \emph{The edge incident to $v_m$ on the path
$v_m \rightsquigarrow v_r$ on the outerface of $G_{k-1}$ is blue.}
In this case, $v_m$ cannot be incident to another red edge. In the
case where $v_m$ is of degree $3$, we proceed similar to the case
C.\ref{c:sing4}.1, where $v_m$ was of degree $3$. So, we now focus
on the case where $v_m$ is of degree $4$. In this case, the fourth
edge attached to $v_m$ can be either (outgoing) green or (incoming)
blue. In the former case, this edge is a reference edge. In the
latter case, it is part of a chain. In both cases, however, this edge has a
horizontal segment; see Figure~\ref{fig:4p_8}. Hence, we can
translate $v_m$ horizontally to the left so to eliminate the bend of
the edge incident to $v_m$ on the path $v_\ell \rightsquigarrow
v_m$; see Figure~\ref{fig:4p_9}. Note that all invariant properties
are fulfilled once $v_i$ is drawn.
\end{inparaenum}
\end{enumerate}

Note that the coordinates of the newly introduced vertices are
determined by the shape of the edges connecting them to $G_{k-1}$.
If there is not enough space between $v_\ell$ and $v_r$ to
accommodate the new vertices, IP-\ref{ip6} allows us to stretch the
drawing horizontally using the reference edge of the newly formed
face.

To complete the description of our algorithm, it remains to cope
with the last partition $P_m = \{v_n\}$ and describe how to draw the
edge $(v_1,v_2)$ of the first partition $P_0$ of $\Pi$. If $v_n$ is
of degree $3$, we cope with $P_m$ as being an ordinary singleton.
However, if $v_n$ is of degree $4$, then we momentarily ignore the
edge $(v_n,v_1)$ of $P_m$ and proceed to draw the remaining edges
incident to $v_n$, assuming that $P_m$ is again an ordinary
singleton. The edge $(v_n,v_1)$ can be drawn afterwards using two
bends in total. Finally, since by construction $v_1$ and $v_2$ are
horizontally aligned, we can draw the edge $(v_1,v_2)$ with a single
bend, emanating from the south-east port of $v_1$ towards the
south-west port of $v_2$.

\begin{theorem}
Let $G$ be a triconnected $4$-planar graph with $n$ vertices. A
planar octilinear drawing $\Gamma(G)$ of $G$ with at most $n+5$
bends can be computed in $O(n)$ time.
\label{thm:4ub}
\end{theorem}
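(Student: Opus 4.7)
The plan is to verify the incremental algorithm sketched above yields a valid planar octilinear drawing and then extract the bend bound from the invariants together with Euler's formula. I would divide the argument into correctness of the invariants, the bend count, the treatment of the two edges that are drawn last, and the running time.

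For correctness, I would proceed by induction on $k$. The base case (the chain $P_0 \cup P_1$ placed horizontally) clearly satisfies IP-\ref{ip1}--IP-\ref{ip6}. For the inductive step, the three cases C.\ref{c:sing3}, C.\ref{c:chain}, C.\ref{c:sing4} already spell out the port assignments, so I would check mechanically that IP-\ref{ip2}--IP-\ref{ip5} still hold after attaching $P_k$. IP-\ref{ip6} is preserved because every newly introduced reference edge is drawn with a horizontal segment (or lies on a chain), so the vertical cuts through older reference edges can be continued through the new face. Planarity follows from IP-\ref{ip6}: whenever $v_\ell$ and $v_r$ are too close together to accommodate the prescribed shapes of the new edges, we stretch along the relevant reference edge using the standard coordinate assignment of Bekos et al.~\cite{BGKK14}. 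The most delicate part of correctness is subcase C.\ref{c:sing4}.\ref{c:blue}--\ref{c:green}, where a horizontal segment at $v_m$ must be eliminated without disturbing earlier invariants; here I would verify that the upward (or leftward) translation of $v_m$ only shortens segments attached to $v_m$ along the stretchable reference edge and therefore preserves IP-\ref{ip2}--IP-\ref{ip6}. This is the step I expect to be the main obstacle, since it simultaneously absorbs one old bend and introduces one new bend, and the bookkeeping has to be local to $v_m$.

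For the bend count I would invoke IP-\ref{ip1} for $G_{m-1}$: the number of bent edges in the drawing of $G_{m-1}$ is at most the number of reference edges in $G_{m-1}$, and by construction each reference edge corresponds to a distinct bounded face of $G_{m-1}$. Since $G$ is $4$-planar, we have $|E|\le 2n$, and Euler's formula yields at most $n+1$ bounded faces in $G$. Hence the number of reference edges assigned during the processing of $P_1,\ldots,P_{m-1}$ is at most $n+1$, giving at most $n+1$ bends inside $G_{m-1}$. It then remains to charge the bends that arise from the last partition $P_m$ and the two edges that were deferred. Adding $P_m$ as an ordinary singleton contributes at most $2$ additional bent edges (the case analysis bounds the number of new reference edges it creates by $2$); the edge $(v_1,v_2)$ is drawn with exactly one bend below the base chain; and if $v_n$ has degree $4$, the edge $(v_n,v_1)$ is drawn with at most two bends. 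Summing these contributions gives the claimed upper bound of $n+5$.

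For the running time, I would observe that each partition $P_k$ is processed by performing a constant amount of work per vertex in $P_k$ plus at most one stretch along a reference edge. Using the incremental coordinate representation of Bekos et al.~\cite{BGKK14}, each stretch can be recorded in amortized constant time, and the final coordinates extracted by one pass through the canonical order. Hence the overall complexity is $O(n)$, completing the proof of Theorem~\ref{thm:4ub}.
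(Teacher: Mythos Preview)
Your approach is essentially the paper's: maintain IP-\ref{ip1} throughout the incremental construction, bound the number of reference edges by the number of faces via Euler, and add a constant for the two deferred edges. The correctness discussion and the running-time argument also match.

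There is, however, a bookkeeping slip in your bend count. You bound the reference edges created during $P_1,\ldots,P_{m-1}$ by the number of bounded faces of $G$, namely $n+1$, and then add $2$ \emph{further} reference edges for $P_m$. But the bounded faces created when placing $P_m$ are already among the $n+1$ bounded faces of $G$, so you are counting them twice; your sum $(n+1)+2+1+2$ is $n+6$, not $n+5$. The clean fix is the one the paper uses: treat $P_m$ as an ordinary singleton (ignoring $(v_n,v_1)$ if $\deg(v_n)=4$) and observe that \emph{all} reference edges---for $P_1,\ldots,P_m$ together---are in bijection with bounded faces of $G$, hence at most $n+2$ (the paper actually bounds by the total number of faces). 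Then the only bends not covered by IP-\ref{ip1} are the one bend on $(v_1,v_2)$ and the two bends on $(v_n,v_1)$, giving $(n+2)+3=n+5$.
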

\begin{proof}
By IP-\ref{ip1}, all bends of $\Gamma(G)$ are in correspondence with
the reference edges of $G$, except for the bends of the edges
$(v_1,v_2)$ and $(v_n,v_1)$. Since the number of reference edges is
at most $n+2$ and the edges $(v_1,v_2)$ and $(v_n,v_1)$ require $3$
additional bends, the total number of bends of $\Gamma(G)$ does not
exceed $n+5$. The linear running time follows from the observation
that we can use the shifting method of Kant~\cite{Kant92} to compute
the actual coordinates of the vertices of $G$, since in the
canonical order the y-coordinates of the vertices that have been
placed at some particular step do not change in subsequent steps
(following a similar approach as in~\cite{BGKK14}).
\end{proof}

\subsection{Triconnected 5-Planar Graphs.}
\label{sec:5planar}

Our algorithm for triconnected $5$-planar graphs is an extension of
the corresponding algorithm of Bekos et al.~\cite{BGKK14}, which
computes for a given triconnected $5$-planar graph $G$ on $n$
vertices a planar octilinear drawing $\Gamma(G)$ of $G$ with at most
one bend per edge. Since $G$ cannot have more than $5n/2$ edges, it
follows that the total number of bends of $\Gamma(G)$ is at most
$5n/2$. However, before we proceed with the description of our
extension, we first provide some insights into this algorithm, which
is based on a canonical order $\Pi$ of $G$.
Central are IP-\ref{ip2} and IP-\ref{ip4} of the previous section and
the so-called \emph{stretchability invariant}, according to which all
edges on the outerface of the drawing constructed at some step of the
canonical order have a horizontal segment and therefore one can
devise corresponding vertical cuts to horizontally stretch the
drawing. We claim that we can appropriately modify this algorithm, so
that all red edges of $\Pi$ are bend-less.

Since we seek to draw all red edges of $\Pi$ bend-less, our
modification is limited to singletons. So, let $P_k = \{ v_i \}$ be
a singleton of $\Pi$. The degree restriction implies that $v_i$ has
at most two incoming red edges (we also assume that $P_k$ is not the
last partition of $\Pi$, that is $k \neq m$). We first consider the
case where $v_i$ has exactly one incoming red edge, say
$e=(v_j,v_i)$, with $j<i$. By construction, $e$ must be attached to
one of the northern ports of $v_j$ (that is, north-west, north or
north-east). On the other hand, $e$ can be attached to any of
the southern ports of $v_i$, as $e$ is its only incoming red edge.
This guarantees that $e$ can be drawn bend-less.

Consider now the more involved case, where $v_i$ has exactly two
incoming red edges, say $e=(v_j,v_i)$ and $e' = (v_{j'},v_i)$ and
assume without loss of generality~that $v_j$ is to the left of
$v_{j'}$ in the drawing of $G_{k-1}$. We distinguish three cases
based on the available ports of $v_j$:

\begin{enumerate}[C.1:] 
  
\item \label{c1} \emph{The north-east port of $v_j$ is unoccupied:}
In this case, $e$ emanates from the north-east port of $v_j$ and
leads to the south-west port of $v_i$ (recall that all southern
ports of singleton $v_i$ are dedicated for incoming red edges; in
this case $e$ and $e'$). If the north-west or the north port of
$v_{j'}$ is unoccupied, then $e'$ can be easily drawn bend-less. In
the former case, $e'$ emanates from the north-west port of $v_{j'}$
and leads to the south-east port of $v_i$. In the latter case, $e'$
emanates from the north port of $v_{j'}$ and leads to the south port
of $v_i$. Hence, the aforementioned port assignment fully specifies
the position of $v_i$. It remains to consider the case, where
neither the north-west nor the north port of $v_{j'}$ is unoccupied,
that is, the north-east port of $v_{j'}$ is unoccupied. By our
coloring scheme and IP-\ref{ip2}, $v_{j'}$ has already two incoming
green edges, say $e_g$ and $e_g'$, and $e'$ is the last edge to be
attached at $v_{j'}$; see Figure~\ref{fig:5p_1}. Therefore, there is
no other (bend-less) red edge involved. We proceed by shifting
$v_{j'}$ up in a way that makes all northern ports of $v_{j'}$
unoccupied; see Figure~\ref{fig:5p_2}. Note that we may have to use
a second bend on the outgoing blue edge of $v_{j'}$ (in order to
maintain the stretchability invariant), but on the other hand we can
eliminate one bend from the second green edge $e_g'$; see
Figure~\ref{fig:5p_2}. So, the total number of bends remains
unchanged. In addition, the endpoints of both $e_g$ and $e_g'$ that
are opposite to $v_{j'}$ may have to be moved horizontally to allow
$e_g$ and $e_g'$ to be drawn planar, but by the stretchability
invariant we are guaranteed that this is always possible. Finally,
the stretchability invariant is maintained, since each edge besides
the red ones contains a horizontal segment.

\begin{figure}[t!]
    \centering 
    \begin{minipage}[b]{.18\textwidth}
        \centering
        \subfloat[\label{fig:5p_1}{}]
        {\includegraphics[width=\textwidth,page=1]{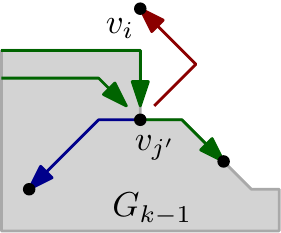}}
    \end{minipage}
    \hfil
    \begin{minipage}[b]{.22\textwidth}
        \centering
        \subfloat[\label{fig:5p_2}{}]
        {\includegraphics[width=\textwidth,page=2]{images/5p}}
    \end{minipage}
    \caption{ 
    (a) $e'=(v_{j'},v_i)$ cannot be drawn bend-less.
    (b) Shifting $v_{j'}$ up resolves the problem.}
    \label{fig:5p} 
\end{figure}

\item \label{c2} \emph{The north-east port of $v_j$ is occupied,
while its north port is unoccupied:} In this case, $e$ emanates from
the north port of $v_j$ and leads to the south port of $v_i$ (that
is, $v_i$ and $v_j$ are vertically aligned). We now claim that the
north-west port of $v_{j'}$ is unoccupied. For the sake of
contradiction, assume that the claim is not true. By our coloring
scheme, the edge attached at the north-west port of $v_{j'}$ is
green, which implies that there must exist a path $v_j
\rightsquigarrow v_{j'}$ at the outerface face of $G_{k-1}$ whose
first edge is blue at the north-east port of $v_j$ and its last edge
is green at the north-west port of $v_{j'}$.
So, path $v_j \rightsquigarrow v_{j'}$ has a vertex which has a
neighbor in $P_\kappa$ for some $\kappa \geq k$. Since $v_i$ is the
only such candidate, the contradiction follows from the degree of
$v_i$. Hence, the north-west port of $v_{j'}$ is unoccupied and
therefore we can draw $e'$ without bends by using the south-east
port of $v_i$ and the north-west port of $v_{j'}$, as desired.

\item \label{c3} \emph{Only the north-west port of $v_j$ is
unoccupied:} We can reduce this case to case C.\ref{c1} by applying
an operation symmetric to the one of Figure~\ref{fig:5p_1}  on vertex
$v_j$. This will result in a configuration where all northern ports
of $v_j$ (including the north-east) are unoccupied.
\end{enumerate}

\begin{theorem}
Let $G$ be a triconnected $5$-planar graph with $n$ vertices. A
planar octilinear drawing $\Gamma(G)$ of $G$ with at most $2n-2$
bends can be computed in $O(n)$ time.
\label{thm:5ub}
\end{theorem}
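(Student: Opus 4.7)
The plan is to extract the bound directly from the modification of the algorithm of Bekos et al.~\cite{BGKK14} just described. The crucial structural input is provided by Section~\ref{sec:preliminaries}: the blue subgraph is a spanning tree of $G$ and so contains exactly $n-1$ edges, while the green subgraph is a forest and so contains at most $n-1$ edges. Hence the total number of non-red edges of $G$ is at most $(n-1)+(n-1)=2n-2$.

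First I would recall that the original algorithm of Bekos et al.~\cite{BGKK14} draws every edge with at most one bend, so initially the sum of bends over blue and green edges is at most $2n-2$ and the sum over red edges is at most the number of red edges. Next I would argue that the modification described above preserves this blue/green sum: the only new operation is the upward shift of $v_{j'}$ in Case~C.\ref{c1}, which may introduce a second bend on the outgoing blue edge of $v_{j'}$ but, as noted, simultaneously eliminates one bend from the second incoming green edge $e_g'$, so the contribution of the (blue,\,green) pair is invariant. Meanwhile, Cases~C.\ref{c1}--C.\ref{c3} together guarantee that every red edge incident to a singleton is drawn bend-less, and the unmodified behaviour of the algorithm already produces bend-less red edges in all remaining situations (chains, singletons with at most one incoming red edge). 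Therefore the total bend count of $\Gamma(G)$ equals the sum of bends over blue and green edges, which is at most $2n-2$.

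The main obstacle is verifying the case analysis and the invariant maintenance. I would check that C.\ref{c1}--C.\ref{c3} really exhaust all configurations of a singleton $v_i$ with two incoming red edges $e$ and $e'$; C.\ref{c1} is the delicate one because the shift of $v_{j'}$ reroutes three edges at once (the blue tree edge and the two green edges $e_g$ and $e_g'$), and I must ensure that the horizontal displacements of the opposite endpoints of $e_g$ and $e_g'$ are realizable. This is where the stretchability invariant of the Bekos et al.\ algorithm is essential: since every non-red edge contains a horizontal segment, the required shifts can always be performed, and after the shift each affected edge still carries a horizontal segment so the invariant is preserved for subsequent steps.

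For the $O(n)$ running time I would argue exactly as in the proof of Theorem~\ref{thm:4ub}: the canonical order contains $O(n)$ partitions, each processed in amortized constant time (including at most one local shift), and the actual vertex coordinates are computed in linear time through the shifting method of Kant~\cite{Kant92}, as is done in~\cite{BGKK14}.
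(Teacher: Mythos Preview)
Your approach is the same as the paper's, but there is one small gap. The case analysis C.\ref{c1}--C.\ref{c3} in Section~\ref{sec:5planar} is stated only for singletons $P_k$ with $k\neq m$ and with \emph{at most two} incoming red edges; the last singleton $P_m=\{v_n\}$ may have three incoming red edges when $\deg(v_n)=5$, and the paper explicitly allows the third of these to carry a bend. Hence your assertion that ``every red edge \ldots\ is drawn bend-less'' is not justified for $v_n$, and with your loose bound of $n-1$ on the number of green edges the total could come to $2n-1$.

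The paper closes this gap by sharpening the green count: since by construction no green edge is incident to $v_1$, the green forest spans at most $n-1$ vertices and therefore has at most $n-2$ edges. Thus blue $+$ green $\leq (n-1)+(n-2)=2n-3$, and adding the one possibly bent red edge at $v_n$ yields the claimed $2n-2$. With this correction your argument coincides with the paper's; the bend-invariance of the shift in C.\ref{c1} and the linear running time are argued exactly as you describe.
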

\begin{proof}
From our extension, it follows that the only edges of $\Gamma(G)$
that have a bend are the blue and the green ones and possibly the
third incoming red edge of vertex $v_n$ of the last partition $P_m$
of $\Pi$. Now, recall that the blue subgraph is a spanning tree of
$G$, while the green one is a forest on the vertices of $G$. So, in
the worst case the green subgraph is a tree on $n-1$ vertices of $G$
(by construction the green subgraph cannot be incident to the first
vertex $v_1$ of $\Pi$). Therefore, at most $2n-2$ edges of
$\Gamma(G)$ have a bend.  In addition, the running time remains
linear since the shifting technique can still be applied. This is
because once a vertex has been placed its $y$-coordinate does not
change anymore, except for the special case of two red edges (cases
C.\ref{c1} and C.\ref{c3}), which does not influence the overall
running time, since it can occur at most once per vertex.
\end{proof}

\subsection{Triconnected 6-Planar Graphs.}
\label{sec:6planar}

In this section, we present an algorithm that based on a canonical
order $\Pi =\{ P_0, P_1, \ldots, P_m \}$ of a given triconnected
$6$-planar graph $G=(V,E)$ results in a drawing $\Gamma(G)$ of $G$,
in which each edge has at most two bends. Hence, in total
$\Gamma(G)$ has $6n-12$ bends. Then, we show how one can
appropriately adjust the produced drawing to reduce the total
number of bends.

\begin{figure}[b!]
    \centering 
    \begin{minipage}[b]{.14\textwidth}
        \centering
        \subfloat[\label{fig:6p_1}{}]
        {\includegraphics[width=\textwidth,page=1]{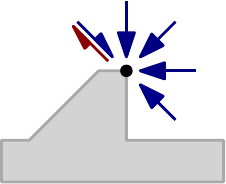}}
    \end{minipage} 
    \hfill
    \begin{minipage}[b]{.14\textwidth}
        \centering
        \subfloat[\label{fig:6p_2}{}]
        {\includegraphics[width=\textwidth,page=2]{images/6p}}
    \end{minipage}  
    \hfill
    \begin{minipage}[b]{.14\textwidth} 
        \centering
        \subfloat[\label{fig:6p_3}{}]
        {\includegraphics[width=\textwidth,page=3]{images/6p}}
    \end{minipage}  
    \hfill
    \begin{minipage}[b]{.14\textwidth}
        \centering
        \subfloat[\label{fig:6p_4}{}]
        {\includegraphics[width=\textwidth,page=4]{images/6p}}
    \end{minipage}  
    \hfill
    \begin{minipage}[b]{.14\textwidth}
        \centering
        \subfloat[\label{fig:6p_5}{}]
        {\includegraphics[width=\textwidth,page=5]{images/6p}}
    \end{minipage}  
    \hfill
    \begin{minipage}[b]{.14\textwidth}
        \centering
        \subfloat[\label{fig:6p_6}{}]
        {\includegraphics[width=\textwidth,page=6]{images/6p}}
    \end{minipage}  
    \hfill
    \begin{minipage}[b]{.18\textwidth}
        \centering
        \subfloat[\label{fig:6p_7}{}]
        {\includegraphics[width=\textwidth,page=7]{images/6p}}
    \end{minipage}  
    \hfill
    \begin{minipage}[b]{.18\textwidth}
        \centering
        \subfloat[\label{fig:6p_8}{}]
        {\includegraphics[width=\textwidth,page=8]{images/6p}}
    \end{minipage}  
    \hfill
    \begin{minipage}[b]{.18\textwidth}
        \centering
        \subfloat[\label{fig:6p_10}{}]
        {\includegraphics[width=\textwidth,page=10]{images/6p}}
    \end{minipage}  
    \hfill
    \begin{minipage}[b]{.18\textwidth}
        \centering
        \subfloat[\label{fig:6p_9}{}]
        {\includegraphics[width=\textwidth,page=9]{images/6p}}
    \end{minipage}  
    \hfill
    \begin{minipage}[b]{.18\textwidth}
        \centering
        \subfloat[\label{fig:6p_11}{}]
        {\includegraphics[width=\textwidth,page=11]{images/6p}}
    \end{minipage}
    \caption{
    (a)-(f)~Illustration of the port assignment computed by Algorithm~\ref{alg:pa}.
    (g)-(k)~Different segment combinations with at most two bends (horizontal segments are drawn dotted)}
    \label{fig:6p_ports}
\end{figure} 

\begin{algorithm}[t!]
\DontPrintSemicolon
\SetKwComment{tcc}{//}{}
\SetKwInOut{Input}{input}
\SetKwInOut{Output}{output}

\Input{A vertex $v$ of a triconnected $6$-planar graph.}
\Output{The port assignment of the edges around $v$, according to the following rules.}
\BlankLine
\begin{enumerate}[{R}1:] 
\item \label{c1} The \texttt{incoming blue edges} of $v$ occupy
consecutive ports in counterclockwise order\\around $v$ starting
from:

\begin{enumerate}[a.]
\item \label{c1a} the south-east port of $v$, if $\indeg{b}{v} + \outdeg{r}{v}=5$; 
see Figure~\ref{fig:6p_1}.
\item \label{c1b} the east port of $v$, if $\indeg{b}{v} +
\outdeg{r}{v}=4$; see Figure~\ref{fig:6p_2}.
\item \label{c1c} the east port of $v$, if $\outdeg{g}{v}=0$ and
(a),(b) do not hold; see Figure~\ref{fig:6p_3}. 
\item \label{c1d} the north-east port of $v$, otherwise;
see Figure~\ref{fig:6p_4}.
\end{enumerate} 

\medskip

\item \label{c2} The \texttt{outgoing red edge} occupies the
counterclockwise next unoccupied port, if $v$ has\\at least
one incoming blue edge. Otherwise, the north-east port of $v$.

\medskip

\item The \label{c3} \texttt{incoming green edges} of $v$ occupy
consecutive ports in clockwise order around $v$ starting from:

\begin{enumerate}[a.]
\item \label{c3a} the west port of $v$, if $\indeg{g}{v} + \outdeg{r}{v} + \indeg{b}{v} \geq 4$; 
see Figure~\ref{fig:6p_5}.
\item \label{c3b} the north-west port of $v$, otherwise; 
see Figure~\ref{fig:6p_6}.
\end{enumerate} 

\medskip

\item \label{c4} The \texttt{outgoing blue edge} of $v$ occupies the
west port of $v$, if it is unoccupied;\\otherwise, the south-west
port of $v$.

\medskip

\item \label{c5} The \texttt{outgoing green edge} of $v$ occupies
the east port of $v$, if it is unoccupied;\\otherwise, the
south-east port of $v$.

\medskip

\item \label{c6} The \texttt{incoming red edges} of $v$ occupy
consecutively in counterclockwise direction the south-west, south
and south-east ports of $v$ starting from the first available.
\end{enumerate}
\caption{\texttt{PortAssignment(v)}} 
\label{alg:pa}
\end{algorithm}

Algorithm~\ref{alg:pa} describes \emph{rules} R\ref{c1} - R\ref{c6}
to assign the edges to the ports of the corresponding vertices. It
is not difficult to see that all port combinations implied by these
rules can be realized with at most two bends, so that all edges have
a horizontal segment (which makes the drawing horizontally
stretchable):
\begin{inparaenum}[(i)]
\item a blue edge emanates from the west or south-west port of a
vertex (by rule R\ref{c4}) and leads to one of the
south-east, east, north-east, north or north-west ports of its
other endvertex (by rule R\ref{c1}); see
Figure~\ref{fig:6p_7} and \ref{fig:6p_8},
\item a green edge emanates from the east or south-east port of a
vertex (by rule R\ref{c5}) and leads to one of the
west, north-west, north or north-east ports of its other endvertex
(by rule R\ref{c3}); see Figure~\ref{fig:6p_10} and
\ref{fig:6p_9},
\item a red edge emanates from one of the north-west, north,
north-east ports of a vertex (by rule R\ref{c2}) and
leads to one of the south-west, south, south-east ports of its
other endvertex (by rule R\ref{c6}); see
Figure~\ref{fig:6p_11}.
\end{inparaenum}

Hence, the shape of each edge is completely determined. To compute
the actual drawing $\Gamma(G)$ of $G$, we follow an incremental
approach according to which one partition (that is, a singleton or a
chain) of $\Pi$ is placed at a time, similar to Kant's
approach~\cite{Kant92b} and the $4$- or $5$-planar case. Each edge is
drawn based on its shape, while the horizontal stretchability ensures
that potential crossings can always be eliminated.  Note additionally
that we adopt the leftist canonical order~\cite{BBC11}, according to
which the leftmost partition is chosen to be placed, when there exist
two or more candidates. Since each edge has at most two bends,
$\Gamma(G)$ has at most $6n-12$ bends in total.

In the following, we reduce the total number of bends. This is done
in two steps. In the first step, we show that all red edges can be
drawn with at most one bend each. Recall that a red edge emanates
from one of the north-west, north, north-east ports of a vertex and
leads to one of the south-west, south, south-east ports of its
other-endvertex. So, in order to prove that all red edges can be
drawn with at most one bend each, we have to consider in total nine
cases, which are illustrated in Figure~\ref{fig:rededges}. It is not
difficult to see that in each of these cases, the red edge can be
drawn with at most one bend. Note that the absence of horizontal
segments at the red edges does not affect the stretchability of
$\Gamma(G)$, since each face of $\Gamma(G)$ has at most two such
edges (which both ``point upward'' at a common vertex). Since a red
edge cannot be incident to the outerface of any intermediate drawing
constructed during the incremental construction of $\Gamma(G)$, it
follows that it is always possible to use only horizontal segments
(of blue and green edges) to define vertical cuts, thus, avoiding all
red edges.

\begin{figure}[t!]
    \centering
    \includegraphics[width=.7\textwidth,page=12]{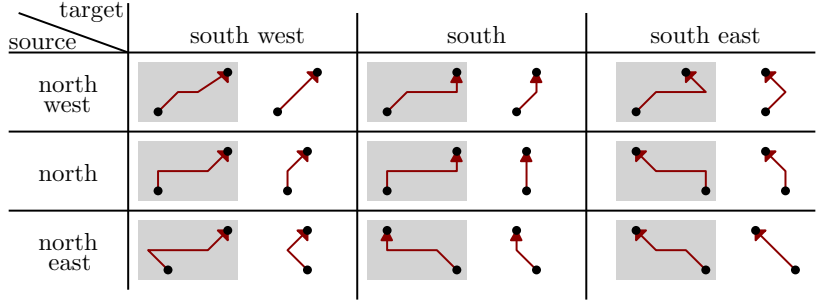} 
    \caption{Red edges can be redrawn with one bend (in gray boxes we show their initial 2-bends shapes)}
    \label{fig:rededges}  
\end{figure} 

The second step of our bend reduction is more involved. Our claim is
that we can ``save'' two bends per vertex\footnote{Except for vertex
$v_1$ of the first partition $P_0$ of $\Pi$, which has no outgoing
blue edge.}, which yields a reduction by roughly $2n$ bends in
total. To prove the claim, consider an arbitrary vertex $u \in V$ of
$G$. Our goal is to prove that there always exist two edges incident
to $u$, which can be drawn with only one bend each. By
rules~R\ref{c3} and R\ref{c4}, it follows that the west port of
vertex $u$ is always occupied, either by an incoming green edge (by
rule~R\ref{c3}) or by a blue outgoing edge (by rule~R\ref{c4}; $u
\neq v_1 \in P_0$). Analogously, the east port of vertex $u$ is
always occupied, either by a blue incoming edge (by rules~R\ref{c1}
and R\ref{c2}) or by an outgoing green edge (by rule~R\ref{c5}). Let
$(u,v) \in E$ be the edge attached at the west port of $u$
(symmetrically we cope with the edge that is attached at the east
port of $u$). If edge $(u,v)$ is attached to a non-horizontal port
at $v$, then $(u,v)$ is by construction drawn with one bend
(regardless of its color; see Figure~\ref{fig:6p_7} and
\ref{fig:6p_10}) and our claim follows.

It remains to consider the case where edge $(u,v)$ is attached to a
horizontal port at $v$. Assume first that edge $(u,v)$ is blue (we
will discuss the case where edge $(u,v)$ is green later). By
Algorithm~\ref{alg:pa}, it follows that edge $(u,v)$ is either the
first blue incoming edge attached at $v$ (by rules R\ref{c1b} and
R\ref{c1c}) or the second one (by rule R\ref{c1a}). We consider each
of these cases separately. In rule R\ref{c1c}, observe that edge
$(u,v)$ is part of a chain (because $\outdeg{g}{u}=0$). Hence, when
placing this chain in the canonical order, we will place $u$
directly to the right of $v$. This implies that $(u,v)$ will be
drawn as a horizontal line segment (that is, bend-less). Similarly,
we cope with rule R\ref{c1b}, when additionally $\outdeg{g}{u}=0$.
So, there are still two cases to consider: rule R\ref{c1a} and rule
R\ref{c1b}, when additionally $\outdeg{g}{u}=1$; see the left part
of Figure~\ref{fig:blueconfiguration}. In both cases, the current
degree of vertex $u$ is $3$ and vertex $v$ (and other vertices that
are potentially horizontally-aligned with $v$) must be shifted
diagonally up, when $u$ is placed based on the canonical order, such
that $(u,v)$ is drawn as a horizontal line segment (that is,
bend-less; see the right part of
Figure~\ref{fig:blueconfiguration}). Note that when $v$ is shifted
up, vertex $v$ and all vertices that are potentially
horizontally-aligned with $v$ are also of degree $3$, since
otherwise one of these vertices would not have a neighbor in some
later partition of $\Pi$, which contradicts the definition of~$\Pi$.

\begin{figure}[t!]
    \centering
    \includegraphics[width=.9\textwidth,page=13]{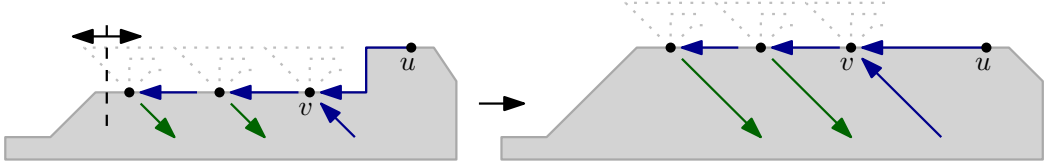}
    \caption{Aligning vertices $u$ and $v$.}
    \label{fig:blueconfiguration}
\end{figure}

We complete our case analysis with the case where edge $(u,v)$ is
green. By rule R\ref{c3a}, it follows that $(u,v)$ is the first
green incoming edge attached at $u$. In addition, when $(u,v)$ is
placed based on the canonical order, there is no red outgoing edge
attached at $u$ (otherwise $u$ would not be at the outerface of the
drawing constructed so far). The leftist canonical order also
ensures that there is no blue incoming edge at $u$ drawn before
$(u,v)$. Hence, vertex $u$ is of degree two, when edge $(u,v)$ is
placed. Hence, it can be shifted up (potentially with other vertices
that are horizontally-aligned with $u$), such that $(u,v)$ is drawn
as a horizontal line segment (that is, bend-less). We summarize our
approach in the following theorem.

\begin{theorem}
Let $G$ be a triconnected $6$-planar graph with $n$ vertices. A
planar octilinear drawing $\Gamma(G)$ of $G$ with at most $3n-8$
bends can be computed in $O(n^2)$ time.
\label{thm:6ub} 
\end{theorem}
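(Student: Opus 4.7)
The plan is to derive both claims of the theorem directly from the construction given above. The starting bound comes from the shape analysis for blue, green, and red edges (Figures \ref{fig:6p_7}--\ref{fig:6p_11}): combined with Algorithm \ref{alg:pa}, this guarantees that every edge of $G$ is drawn with at most two bends, so the initial drawing has at most $2|E(G)| \le 6n-12$ bends. I then need to verify that the two bend-reduction steps described above together shave the count down to $3n-8$.

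For the first reduction I would confirm that the nine red-edge configurations of Figure \ref{fig:rededges} can each be redrawn with at most one bend, and that this does not break stretchability: since a red edge is never on the outer face of any intermediate drawing, the horizontal segments of the surrounding blue and green edges in each face still suffice to carry every vertical cut. Because $\outdeg{r}{v}\le 1$ for every $v$, the red subgraph is a forest of at most $n-1$ edges, so this step saves at most $n-1$ bends. For the second reduction I would walk through the sub-cases laid out in the preceding paragraphs: the west port of every vertex $u\neq v_1$ is forced to be occupied (by rules R\ref{c3} and R\ref{c4}), and symmetrically for the east port; if the far endpoint of such an edge receives it at a non-horizontal port, Figures \ref{fig:6p_7} and \ref{fig:6p_10} already yield a one-bend drawing, and otherwise the leftist canonical order bounds the current degree of the far endpoint (by three in the blue sub-cases R\ref{c1a}--R\ref{c1c}, and by two in the green sub-case R\ref{c3a}), which is exactly what allows the upward diagonal shift that turns the edge into a bend-less horizontal segment. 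This saves two bends per vertex other than $v_1$, i.e., about $2(n-1)$ bends; combined with the red-edge savings and a small additive correction for the last partition $P_m$ (whose terminal vertex has no ``later'' neighbors to exploit), the total becomes at most $3n-8$.

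The main obstacle is proving that each such upward shift is compatible with the already-constructed drawing: the shifted far endpoint may have neighbors whose incident edges could become non-octilinear or force crossings. This is handled by the stretchability invariant, which permits horizontal stretches along vertical cuts through any already-drawn face, together with the degree bound that limits how many of the far endpoint's northern ports can already be committed; the leftist choice of candidate partition is what keeps the bound tight. For the $O(n^2)$ running time, I would observe that each vertex is shifted at most a constant number of times after its initial placement, that every shift propagates through at most $O(n)$ previously-placed vertices via the stretchability mechanism, and that the overall framework is still the incremental Kant-style sweep of~\cite{Kant92}, which gives quadratic time overall.
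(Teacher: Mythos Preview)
Your plan is the paper's plan: the proof there is exactly this two-line bookkeeping (start at $6n-12$, drop to $5n-9$ after the red-edge step, then to $3n-8$ after saving two bends per vertex via the horizontal-port argument), together with the remark that the second reduction changes $y$-coordinates and therefore costs $O(n^2)$ rather than linear time.

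The one place your write-up would not compile into a valid proof is the red-edge accounting. You say the red subgraph is a forest of at most $n-1$ edges, ``so this step saves at most $n-1$ bends.'' An \emph{upper} bound on the savings is the wrong direction: it yields a lower bound on the remaining bend count, not the upper bound you need. What actually works is to bound the post-step-1 total directly: it equals $2|E_{\mathrm{blue}}|+2|E_{\mathrm{green}}|+|E_{\mathrm{red}}| = (n-1)+m+|E|$, where $m$ is the number of partitions $P_1,\dots,P_m$ (each contributes exactly one green edge, so $|E_{\mathrm{green}}|=m$); since $m\le n-2$ and $|E|\le 3n-6$, this is at most $5n-9$. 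The second step then saves $2n-1$ bends (not $2(n-1)$: vertex $v_1$ has no outgoing blue at its west port but still contributes one saving via its east port), yielding $3n-8$ exactly. Your ``small additive correction for $P_m$'' is not needed and does not appear in the paper.
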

\begin{proof}
Before the two bend-reduction steps, $\Gamma(G)$ contains at most
$6n-12$ bends. In the first reduction step, all red edges are drawn
with one bend. Hence, $\Gamma(G)$ contains at most $5n-9$ bends. In
the second reduction step, we ``save'' two bends per vertex (except
for $v_1 \in P_0$, which has no outgoing blue edge), which yields a
reduction by $2n-1$ bends. Therefore, $\Gamma(G)$ contains at most
$3n-8$ bends in total. On the negative side, we cannot keep the
running time of our algorithm linear. The reason is the second
reduction step, which yields changes in the $y$-coordinates of the
vertices. In the worst case, however, quadratic time suffices.
\end{proof}

Note that there exist $6$-planar graphs that do not admit planar
octilinear drawings with at most one bend per edge~\cite{BGKK14}.
Theorem~\ref{thm:6ub} implies that on average one bend per edge
suffices.

\section{Lower Bounds}
\label{sec:lowerbounds}

In this section, we present lower bounds on the total number of
bends for the classes of triconected $4$-, $5$- and $6$-planar
graphs.

\subsection{4-Planar Graphs.} 
\label{sec:4planarl}

We start our study with the case of $4$-planar graphs.
Our main observation is that if a $3$-cycle $\mathcal{C}_3$ of a
graph has at least two vertices, with at least one neighbor in the
interior of $\mathcal{C}_3$ each, then at least one edge of
$\mathcal{C}_3$ must contain a bend, since the sum of the interior
angles at the corners of $\mathcal{C}_3$ exceeds $180^\circ$. In
fact, elementary geometry implies that a $k$-cycle, say
$\mathcal{C}_k$ with $k \geq 3$, whose vertices have $\sigma \geq 0$
neighbors in the interior of $\mathcal{C}_k$ requires (at least)
$\max \{0, \lceil (\sigma -3k + 8)/3 \rceil \}$ bends.
Therefore, a bend is necessary. Now, refer to the $4$-planar graph
of Figure~\ref{fig:lower_bound_4_planar}, which contains $n/3$ nested
triangles, where $n$ is the number of its vertices. It follows that
this particular graph requires at least $n/3-1$ bends in total.

\begin{figure}[t!]
    \centering
    \begin{minipage}[b]{.32\textwidth}
        \centering
        \subfloat[\label{fig:lower_bound_4_planar}{A $4$-planar graph}] 
        {\includegraphics[width=.95\textwidth,page=1]{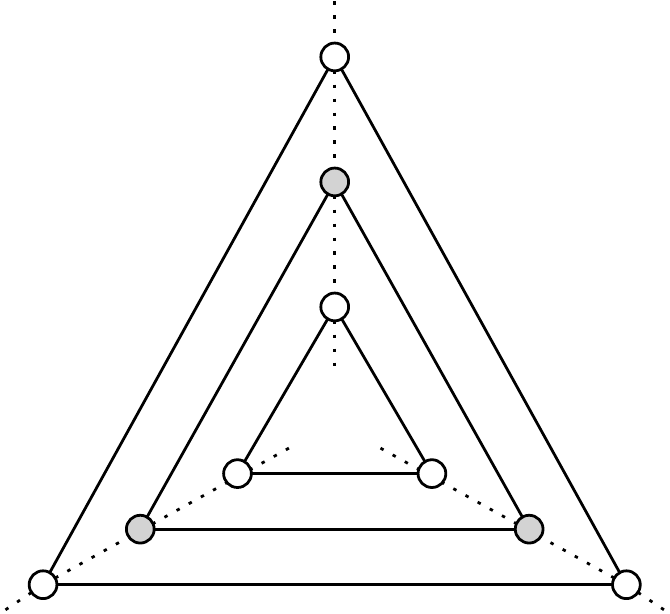}}
    \end{minipage}
    \begin{minipage}[b]{.32\textwidth}
        \centering
        \subfloat[\label{fig:lower_bound_5_planar}{A $5$-planar graph}]
        {\includegraphics[width=.95\textwidth,page=2]{images/lower_bounds}}
    \end{minipage} 
    \begin{minipage}[b]{.32\textwidth}
        \centering
        \subfloat[\label{fig:lower_bound_6_planar}{A $6$-planar graph}]
        {\includegraphics[width=.95\textwidth,page=3]{images/lower_bounds}}
    \end{minipage}
    \caption{
    Planar graphs of different degrees that require
    (a)~$n/3-1$, (b)~$2n/3-2$ and (c)~$4n/3-6$ bends.} 
    \label{fig:lower_bounds}
\end{figure}

\subsection{5- and 6-Planar Graphs.} 
\label{sec:56planarl}

For $5$- and $6$-planar graphs, our proof becomes more complex. For
these classes of graphs, we follow an approach inspired by Tamassia's
min-cost flow formulation~\cite{Tamassia87} for computing
bend-minimum representations of embedded planar graphs of bounded
degree. Since it is rather difficult to implement this algorithm in
the case where the underlying drawing model is not the orthogonal
model, we developed an ILP instead. Recall that a representation
describes the ``shape'' of a drawing without specifying its exact
geometry. This is enough to determine a lower bound on the number of
bends, even if a bend-optimal octilinear representation may not be
realizable by a corresponding (planar) octilinear drawing.

In our formulation, variable $\alpha(u,v) \cdot 45^\circ$
corresponds to the angle formed at vertex $u$ by edge $(u,v)$
and its cyclic predecessor around vertex $u$. Hence, $1 \leq
\alpha(u,v) \leq 8$. Since the sum of the angles around a vertex
is $360^\circ$, it follows that $\sum_{v \in N(u)}a(u,v) =
8$.  Given an edge $e=(u,v)$, variables $\ell_{45}(u,v)$,
$\ell_{90}(u,v)$ and $\ell_{135}(u,v)$ correspond to the number of
left turns at $45^\circ$, $90^\circ$ and $135^\circ$ when moving
along $(u,v)$ from vertex $u$ towards vertex $v$. Similarly,
variables $r_{45}(u,v)$, $r_{90}(u,v)$ and $r_{135}(u,v)$ are
defined for right turns. All aforementioned variables are integer
lower-bounded by zero. For a face $f$, we assume that
its edges are directed according to the clockwise traversal of $f$.
This implies that each (undirected) edge of the graph appears twice
in our formulation.  For reasons of symmetry, we require
$\ell_{45}(u,v) = r_{45}(v,u)$, $\ell_{90}(u,v) = r_{90}(v,u)$ and
$\ell_{135}(u,v) = r_{135}(v,u)$. Since the sum of the angles formed
at the vertices and at the bends of a bounded face $f$ equals to
$180^\circ\cdot(p(f)-2)$, where $p(f)$ denotes the total number of
such angles, it follows that $\sum_{(u,v) \in
E(f)}\alpha(u,v)+(\ell_{45}(u,v)+\ell_{90}(u,v)+\ell_{135}(u,v)) -
(r_{45}(u,v)+r_{90}(u,v)+r_{135}(u,v))=4a(f)-8$, where $a(f)$
denotes the total number of vertex angles in $f$, and, $E(f)$ the
directed arcs of $f$ in its clockwise traversal. If $f$ is
unbounded, the respective sum is increased by $16$. Of course, the
objective is to minimize the total number of bends over all edges,
or, equivalently $\min \sum_{(u,v)\in E}\ell_{45}(u,v)
+ \ell_{90}(u,v) + \ell_{135}(u,v) + r_{45}(u,v)+r_{90}(u,v)
+ r_{135}(u,v)$.

Now, consider the $5$-planar graph of
Figure~\ref{fig:lower_bound_5_planar} and observe that each ``layer''
of this graph consist of six vertices that form an octahedron
(solid-drawn), while octahedrons of consecutive layers are connected
with three edges (dotted-drawn). Using our ILP formulation, we prove
that each octahedron subgraph requires at least $4$ bends, when drawn
in the octilinear model (except for the innermost one for which we
can guarantee only two bends). This implies that $2n/3-2$ bends are
required in total to draw the graph of
Figure~\ref{fig:lower_bound_5_planar}. For the $6$-planar case, we
apply our ILP approach to a similar graph consisting of nested
octahedrons that are connected by six edges each; see
Figure~\ref{fig:lower_bound_6_planar}. This leads to a better lower
bound of $4n/3-6$ bends, as each octahedron except for the innermost
one requires $8$ bends. Summarizing we have:

\begin{theorem}
There exists a class $G_{n,k}$ of triconnected embedded $k$-planar
graphs, with $4 \leq k \leq 6$, whose octilinear drawings
require at least:
\begin{inparaenum}[(i)]
  \item $n/3-1$ bends, if $k=4$,
  \item $2n/3-2$ bends, if $k=5$ and
  \item $4n/3-6$ bends, if $k=6$.
\end{inparaenum}
\label{thm:lb}
\end{theorem}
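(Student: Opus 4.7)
The plan is to establish each of the three bounds separately via the nested-graph constructions of Figure~\ref{fig:lower_bounds}: a direct geometric argument for $k=4$ and the ILP formulation described in the excerpt for $k\in\{5,6\}$. In every case the graph consists of $\Theta(n)$ nested combinatorial blocks and the bound follows by summing a constant lower bound per block.

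For $k=4$ the first step is to prove in full the cycle lemma stated in Section~\ref{sec:4planarl}. View $\mathcal{C}_k$ together with its $b$ bends as a simple polygon with $k+b$ corners; its interior angles sum to $(k+b-2)\cdot 180^\circ$. At a vertex with $d_v$ interior neighbours, the $d_v$ interior incident edges split the interior angle at $v$ into $d_v+1$ gaps each of which is a positive multiple of $45^\circ$, so the interior angle is at least $(d_v+1)\cdot 45^\circ$; summing over all $k$ vertices gives at least $(\sigma+k)\cdot 45^\circ$, and each bend contributes at least $45^\circ$ more. The total lower bound $(\sigma+k+b)\cdot 45^\circ$ cannot exceed the fixed polygon total $(k+b-2)\cdot 180^\circ$, which rearranges to $b\geq\lceil(\sigma-3k+8)/3\rceil$. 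In the nested-triangles graph each of the $n/3$ triangles, except the innermost, has at least two of its own vertices adjacent to vertices of the next inner triangle, so $\sigma\geq 2$ and at least one bend per such triangle, giving the $n/3-1$ bound.

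For $k\in\{5,6\}$ the building block is an octahedron and the per-block lower bound comes from the ILP of Section~\ref{sec:56planarl}. The ILP optimum is a valid lower bound because every octilinear drawing of the graph realises some feasible octilinear representation; the realisability gap hurts only upper bounds. For a single octahedron $O$ of the nested construction, the plan is to restrict the ILP to the subgraph induced by $O$ together with the three (respectively six) inter-octahedron stub edges incident to $O$ from outside, impose the vertex angle-sum and face angle-sum constraints on every face of $O$, and verify that the minimum objective is $4$ (respectively $8$). The innermost octahedron has no inward stubs and hence forces only $2$ bends. Multiplying by the $n/6$ blocks yields $2n/3-2$ and $4n/3-6$, respectively.

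The hard part is proving the per-octahedron bounds uniformly over all admissible embeddings of $O$. The strategy is to enumerate the combinatorially distinct embeddings of $O$ with its stub edges---drastically reduced by the large automorphism group of the octahedron $K_{2,2,2}$---and solve the ILP for each representative, either by a certified solver or, ideally, by a direct angle-counting argument that simultaneously sums the cycle lemma over all faces of $O$. A secondary technicality is to attribute bends on inter-octahedron edges to a single octahedron so as to avoid double-counting; since these edges are structurally separate from the edges of each octahedron, each such bend can be charged to exactly one block. Combining the three cases yields the theorem.
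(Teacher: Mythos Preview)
Your proposal follows essentially the same approach as the paper: the cycle-lemma argument for $k=4$ (which you actually prove, whereas the paper merely asserts it as ``elementary geometry'') and the per-octahedron ILP computation for $k\in\{5,6\}$. One minor simplification: since the theorem concerns \emph{embedded} graphs, the embedding of each octahedron block is already fixed by the construction, so your planned enumeration of admissible embeddings is unnecessary---running the ILP on the single given embedding, as the paper implicitly does, suffices.
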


\section{Conclusions}
\label{sec:conclusions}

In this paper, we studied bounds on the total number of bends of octilinear
drawings of triconnected planar graphs. We showed how one can adjust an
algorithm of Keszegh et al.~\cite{KPP13} to derive an upper bound of $4n-10$
bends for general $8$-planar graphs.
Then, we improved this general bound and previously-known ones for the classes
of triconnected $4$-, $5$- and $6$-planar graphs. For these classes of graphs,
we also presented corresponding lower bounds.\\
We mention two major open problems in this context. The first one is to extend
our results to biconnected and simply connected graphs and to further tighten
the bounds. Since our drawing algorithms might require super-polynomial area
(cf.~arguments from~\cite{BGKK14}), the second problem is to study trade-offs
between the total number of bends and the required area.

\bibliographystyle{abbrv}
\bibliography{references}

\end{document}